\documentclass[conference]{IEEEtran}
\IEEEoverridecommandlockouts
\usepackage{cite}
\usepackage{amsmath,amssymb,amsfonts}
\usepackage{algorithmic}
\usepackage{graphicx}
\usepackage{textcomp}
\usepackage{xcolor}
\usepackage{amsthm}
\newtheorem{theorem}{Lemma}
\begin{document}

\title{Power Amplifier-Aware Transmit Power Optimization for OFDM and SC-FDMA Systems
\thanks{This research was funded by the Polish National Science Centre, project no. 2021/41/B/ST7/00136. For the purpose of Open Access, the author has applied a CC-BY public copyright license to any Author Accepted Manuscript (AAM) version arising from this submission.}
}

\author{\IEEEauthorblockN{Pawel KRYSZKIEWICZ}
\IEEEauthorblockA{\textit{Institute of Radiocommunications} \\
\textit{Poznan University of Technology}\\
Poznan, POLAND \\
pawel.kryszkiewicz@put.poznan.pl}
}

\maketitle

\begin{abstract}
The Single Carrier-Frequency Division Multiple Access (SC-FDMA) is a transmission technique used in the uplink of Long Term Evolution (LTE) and 5G systems, as it is characterized by reduced transmitted signal envelope fluctuations in comparison to Orthogonal Frequency Division Multiplexing (OFDM) technique used in the downlink. This allows for higher energy efficiency of User Equipments (UEs) while maintaining sufficient signal quality, measured by Error Vector Magnitude (EVM), at the transmitter. This paper proposes to model a nonlinear Power Amplifier (PA) influence while optimizing the transmit power in order to maximize the Signal to Noise and Distortion power Ratio (SNDR) at the receiver, removing the transmitter-based EVM constraint. An analytic model of SNDR for the OFDM system and a semi-analytical model for the SC-FDMA system are provided. Numerical investigations show that the proposed transmit power optimization allows for improved signal quality at the receiver for both OFDM and SC-FDMA systems. However, SC-FDMA still outperforms OFDM in this matter. Such a power amplifier-aware wireless transmitter optimization should be considered to boost the performance and sustainability of next-generation wireless systems, including Internet of Things (IoT) ones. 
\end{abstract}

\begin{IEEEkeywords}
OFDM, SC-FDMA, DFT-precoded OFDM, Power Amplifier, nonlinearity, transmit power
\end{IEEEkeywords}

\section{Introduction}
One of the important challenges for future wireless communication systems is to improve their sustainability. It is the most important for Internet of Things (IoT) devices that are typically battery-powered and require their energy efficiency to be high. One of the solutions, utilized, e.g., in User Equipments (UEs) of LTE or 5G New Radio (NR) systems, is Single Carrier-Frequency Division Multiple Access (SC-FDMA), called also Discrete Fourier Transform (DFT)-precoded Orthogonal Frequency Division Multiplexing (OFDM)\cite{Ochiai_SCFDMA_distribution_2012}. This allows to reduce complex envelope signal fluctuations, typically measured by Peak-to-Average Power Ratio (PAPR), in comparison to an OFDM signal. The higher the envelope fluctuations the higher the amount of nonlinear distortion while passing through a nonlinear Power Amplifier (PA) \cite{Gharaibeh_book_nonlinear}. Additionally, SC-FDMA can result in lower energy consumption by the PA \cite{Kryszkiewicz_Battery_2023}.

Contemporary, the nonlinear distortion is limited to comply with some performance metric at the transmitter, e.g., Error Vector Magnitude (EVM) \cite{3gpp_36101,3gpp_38141}. This can be achieved, e.g., by fixing the PA operating point, i.e., transmission power in relation to the clipping power of the PA. However, recently it has been observed for OFDM systems that it is possible to adjust the operating point in order to maximize the transmission quality \cite{Taveres_IBO_opt_OFDM_2016}, i.e., find balance between wanted signal power and nonlinear distortion power from the perspective of receiver-observed Signal to Noise and Distortion power Ratio (SNDR). The authors utilized a soft-limiter (clipper) PA model, that because of its simplicity, allowed for an analytic solution.   
Next, in \cite{Kryszkiewicz_PIMRC_2019} this concept has been extended to an OFDM system utilized in an edge computing scenario. The Rapp model, being a more generalized solution to a PA modeling, was used in \cite{Kryszkiewicz_Battery_2023}, though a closed-form expression was not obtained. Moreover, in addition to SNDR calculation energy efficiency maximization for OFDM was considered. The main conclusion of these papers is that dynamic operating point adjustment allows for higher throughput or even any transmission in a long-range link scenario.

However, similar considerations are missing for SC-FDMA systems, more suitable for uplink or IoT transmissions. For this purpose, SC-FDMA amplitude distribution is needed. Its approximation requiring though numerical integration is provided in \cite{Ochiai_SCFDMA_distribution_2012}. It has been used in \cite{Kun_SCFDMA_EVM_2017} to perform EVM analysis at the receiver for an SC-FDMA and OFDM transmission. To obtain results some data, e.g., SC-FDMA Power Spectral density (PSD) was obtained using simulations. As such the authors call their model semi-analytic. However, dynamic operating point adjustment has not been considered therein. 

This paper aims to derive the average SNDR formula for OFDM and SC-FDMA under soft-limiter PA to compare their performance under dynamic operating point adjustment. While OFDM allows for an analytic formula, a semi-analytic solution is provided for SC-FDMA with a theoretical justification of factors impacting SNDR. Moreover, while typically, time-domain SNDR calculation is used, e.g., in \cite{Taveres_IBO_opt_OFDM_2016, Kryszkiewicz_Battery_2023}, here a more accurate frequency-domain SNDR definition is provided, considering only in-band distortion. The numerical analysis is carried out showing that, similarly as in OFDM, dynamic operating point adjustment can significantly improve received signal quality for the SC-FDMA technique, though the optimal operating point depends on the constellation size and is different than for the OFDM. Moreover, the in-band-only distortion calculation is shown to result in a different operating point than the time-domain method and as such should be used.

The paper is organized as follows: The system model along with SNDR definitions for OFDM and SC-FDMA systems are shown in Sec. \ref{sec_System_model}. Next, the optimization problem for finding the SNDR maximizing operating point is shown in Sec. \ref{sec_SNDR_max}. The numerical evaluation results are shown in Sec. \ref{sec_simulation} that is followed by conclusions in Sec. \ref{sec_conclusions}.

\section{System Model}
\label{sec_System_model}
The considered transmitter is composed of an $N$-point Inverse Fast Fourier Transform (IFFT) block with the $n$-th sample of a single SC-FDMA/OFDM symbol on its output described as
\begin{equation}
    x_n=\sum_{k=0}^{N_{U}-1} d_k e^{j 2 \pi \frac{n I_k}{N}},
\end{equation}
where $d_k$ is the complex symbol modulating $I_k$-th subcarrier. The indices of all $N_{U}$ occupied subcarriers are denoted by a vector $I$ composed of unique, increasing (sorted in ascending order) elements of a total set of available subcarriers $\{ -N/2, ...., N/2-1\}$. In general, there is no assumption regarding the utilized subcarriers location, e.g., \cite{Ochiai_SCFDMA_distribution_2012} shown that the transmitted signal amplitude distribution, used in derivations in this paper, is invariant of the subcarriers allocation scheme. However, the performed simulations follow the Localized Frequency Division Multiple Access (LFDMA) approach, i.e., all utilized subcarriers are in a single block, similarly as in \cite{Kun_SCFDMA_EVM_2017}.
The unoccupied subcarriers are left to allow for digital-analog processing (reducing aliasing effect) or in order to reduce interference to other users. For OFDM the modulating symbols $d_k$ belong to one of the Quadrature Amplitude Modulation (QAM) or Phase Shift Keying (PSK) constellations.

However, for a SC-FDMA waveform the symbols $d_k$ are obtained as a result of $N_{U}$-point DFT processing as
\begin{equation}
    d_k=\frac{1}{\sqrt{N_{U}}}\sum_{\tilde{n}=0}^{N_{U}-1} \tilde{d}_{\tilde{n}} e^{-j 2 \pi \frac{\tilde{n} k}{N}},
\end{equation}
for $k\in \{0,...,N_{U}-1\}$. Here we assume $\tilde{d}_{\tilde{n}}$ are independent, complex symbols belonging to a chosen QAM/PSK constellation. 

Both for OFDM and SC-FDMA the transmitted time-domain signal has fixed mean power 
\begin{equation}
    \mathbb{E} \left[ |x_n|^2\right]=\sigma^2.
    \label{eq_power}
\end{equation}
Such a signal passes through a nonlinear PA of transfer function $\Gamma(~)$ giving output sample $y_n$ as
\begin{equation}
    y_n=\Gamma \left( x_n \right).
\end{equation}
While there are multiple PA models\cite{Gharaibeh_book_nonlinear}, here a soft-limiter will be used. First, it provides the highest, out of all possible nonlinearities, Signal to Distortion Ratio (SDR) in the case of OFDM signal \cite{Raich_optimal_nonlin_2005}. It is also possible that such a characteristic is obtained as an effective one for a PA preceded by a predistorter\cite{Wyglinski_predistortion_2016}. Moreover, in the case of a soft-limiter PA, analytical solutions for wanted signal and nonlinear distortion power exist \cite{Ochiai_2013_PA_efficiency,Taveres_IBO_opt_OFDM_2016}.
The PA output can be defined as
\begin{equation}
y_n=
\begin{cases}
x_n & \mbox{for}~~ |x_n|<\sqrt{P_{\mathrm{MAX}}} \\
\frac{\sqrt{P_{\mathrm{MAX}}}}{|x_n|}x_n & \mbox{for}~~ |x_n|\geq \sqrt{P_{\mathrm{MAX}}},
\end{cases}
\label{eq:PA_in_out}
\end{equation}
where $P_{\mathrm{MAX}}$ is the saturation power of the PA. An important parameter characterizing the operating point of the PA is Input Back-Off (IBO) specified as
\begin{equation}
    \gamma=\frac{P_{\mathrm{MAX}}}{\sigma^2}.
\end{equation}
Most importantly, the PA output signal can be decomposed into
\begin{equation}
y_n=\alpha x_n +q^{\mathrm{PA}}_{n},
\label{eq_busggang}
\end{equation}
where
\begin{equation}
    \alpha=\frac{\mathbb{E}\left[y_n x_n^* \right]}{\mathbb{E}\left[x_n x_n^* \right]}
    \label{eq_lambda_def}
\end{equation}
is a linear scaling factor of the input signal, and $q^{\mathrm{PA}}_{n}$ is a nonlinear distortion sample uncorrelated with $x_n$. While such a decomposition is commonly justified for OFDM signal, having complex Gaussian distribution\cite{Wei_2010_OFDM_dist}, by Bussgang theorem, it can be used for input signals of different characteristics, e.g., SC-FDMA, as well. While it was justified by the theory of homogenous linear mean square estimation in \cite{Kryszkieiwcz_ACTR_2018}, authors of \cite{Kun_SCFDMA_EVM_2017} confirmed its accuracy for SC-FDMA signal by means of simulations. 

Now, the problem is to obtain $\alpha$ and power of $q^{\mathrm{PA}}_{n}$. The first step can be the following Lemma:
 \begin{theorem}
 The linear scaling factor $\alpha$ for a soft-limiter PA is dependent on the IBO value, not the transmitted signal power itself for any input signal. The nonlinear distortion power is proportional to transmit signal power $\sigma^2$, with the proportionality coefficient dependent on IBO, not the transmit signal power $\sigma^2$. 
 \label{lemma1}
 \end{theorem}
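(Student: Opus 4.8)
The plan is to exploit the scale covariance (positive homogeneity) of the soft-limiter characteristic, which lets me factor out the absolute power scale $\sigma^2$ and reduce the whole problem to a unit-power one whose only free parameter is the IBO $\gamma$.

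First I would normalize the input by its root-mean-square value, defining $\hat{x}_n = x_n/\sigma$ so that $\mathbb{E}[|\hat{x}_n|^2]=1$ by \eqref{eq_power}. Since changing the transmit power $\sigma^2$ amounts only to scaling the constellation (or the symbols $\tilde{d}_{\tilde{n}}$) by a common constant, the distribution of the normalized signal $\hat{x}_n$ is fixed and independent of $\sigma^2$ for a given waveform and constellation. This is the one place where I must be careful to keep the argument valid ``for any input signal'': the claim rests only on this scale invariance of the normalized shape, not on any particular (e.g.\ Gaussian) law.

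Next I would substitute $x_n = \sigma \hat{x}_n$ and $\sqrt{P_{\mathrm{MAX}}} = \sqrt{\gamma}\,\sigma$ into the piecewise definition \eqref{eq:PA_in_out}. Dividing the switching condition $|x_n| \gtrless \sqrt{P_{\mathrm{MAX}}}$ through by $\sigma$ turns it into $|\hat{x}_n| \gtrless \sqrt{\gamma}$, and both branches collapse to $y_n = \sigma\,\hat{y}_n$, where $\hat{y}_n$ is the output of a soft-limiter with normalized saturation level $\gamma$ applied to the unit-power signal $\hat{x}_n$. In short, the PA acting at power $\sigma^2$ with IBO $\gamma$ is, up to the overall factor $\sigma$, the same PA acting on the unit-power signal with the same $\gamma$.

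With this factorization both claims fall out by direct substitution. For the linear gain, inserting $y_n = \sigma\hat{y}_n$ and $x_n = \sigma\hat{x}_n$ into \eqref{eq_lambda_def} makes the $\sigma^2$ in numerator and denominator cancel, leaving $\alpha = \mathbb{E}[\hat{y}_n \hat{x}_n^*]$, a quantity determined solely by $\gamma$ and the fixed normalized distribution. For the distortion, writing $q^{\mathrm{PA}}_n = y_n - \alpha x_n = \sigma(\hat{y}_n - \alpha \hat{x}_n)$ gives $\mathbb{E}[|q^{\mathrm{PA}}_n|^2] = \sigma^2\,\mathbb{E}[|\hat{y}_n - \alpha\hat{x}_n|^2]$, exhibiting the distortion power as $\sigma^2$ times a coefficient that again depends only on $\gamma$. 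The only real obstacle is the bookkeeping of the scale invariance in the normalization step; once the homogeneity $y_n = \sigma\hat{y}_n$ is established, both statements are immediate and require no distributional assumptions.
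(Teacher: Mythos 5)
Your proposal is correct and follows essentially the same route as the paper: normalize the input to unit power, use the homogeneity of the soft-limiter to write $y_n=\sigma\hat{y}_n$ with the normalized nonlinearity parametrized only by $\gamma$, and then read off $\alpha$ and the $\sigma^2$-scaling of the distortion power. The only cosmetic difference is that you compute $\alpha$ directly from \eqref{eq_lambda_def} while the paper restates the Bussgang decomposition for the normalized system and compares coefficients; you also make explicit the (correct and needed) assumption that the normalized signal's distribution is independent of $\sigma^2$, which the paper leaves implicit.
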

  \begin{proof}

  It can be assumed that at the PA input samples $x_n$ are normalized by $\sigma$ creating 
  \begin{equation}
      \tilde{x}_n=\frac{x_n}{\sigma}
      \label{eq_x_norm}
  \end{equation}
  of unit mean power.
  By substituting it to (\ref{eq:PA_in_out}) a modified soft-limiter model is obtained as
  \begin{equation}
y_n=
\begin{cases}
\sigma\tilde{x}_n & \mbox{for}~~ |\tilde{x}_n|<\frac{\sqrt{P_{\mathrm{MAX}}}}{\sigma} \\
\frac{\sqrt{P_{\mathrm{MAX}}}}{|\tilde{x}_n|}\tilde{x}_n & \mbox{for}~~ |\tilde{x}_n|\geq \frac{\sqrt{P_{\mathrm{MAX}}}}{\sigma}.
\end{cases}
\label{eq:PA_in_out2}
\end{equation}
After substitution of $P_{\mathrm{MAX}}=\gamma \sigma^2$ it is obtained
  \begin{equation}
{y}_n=
\sigma
\underbrace{
\begin{cases}
\tilde{x}_n & \mbox{for}~~ |\tilde{x}_n|<\sqrt{\gamma} \\
\frac{\sqrt{\gamma}}{|\tilde{x}_n|}\tilde{x}_n & \mbox{for}~~ |\tilde{x}_n|\geq \sqrt{\gamma}.
\end{cases}}_{\tilde{y}_n=\tilde{\Gamma}\left(\tilde{x}_n\right)}
\label{eq:PA_in_out3}
\end{equation}
Observe that the nonlinear transfer function $\tilde{\Gamma}(~)$ maps the unit power signal $\tilde{x}_n$ to output signal $\tilde{y}_n$ and depends only on the IBO value. This can be used as equivalent nonlinearity to represent the mapping from $x_n$ to $y_n$ as 
\begin{equation}
    y_n=\sigma \tilde{\Gamma}\left(\frac{{x}_n}{\sigma}\right).
\end{equation}
After multiplying $\tilde{y}_n$ by $\sigma$ the signal $y_n$ is obtained. The decomposition in (\ref{eq_busggang}) can be defined for the equivalent nonlinearity $\tilde{\Gamma}(~)$ as
  \begin{equation}
\tilde{y}_n=\tilde{\alpha} \tilde{x}_n +\tilde{q}^{\mathrm{PA}}_{n},
\label{eq_busggang2}
\end{equation}
where $\tilde{\alpha}$ and $\tilde{q}^{\mathrm{PA}}_{n}$ are the new: scaling factor and nonlinear, uncorrelated distortion sample, respectively. The coefficient $\tilde{\alpha}$ depends on the IBO, not the transmit signal power $\sigma^2$ as a result of unit power input signal and function $\tilde{\Gamma}(~)$ parametrized only by IBO. For the same reason $\tilde{q}^{\mathrm{PA}}_{n}$ depends on IBO, not the transmit power itself.
By substituting ${y}_n=\sigma\tilde{y}_n$ and (\ref{eq_x_norm}) it is obtained
  \begin{equation}
{y}_n=\sigma\tilde{\alpha} \frac{{x}_n}{\sigma} +\sigma\tilde{q}^{\mathrm{PA}}_{n}=\tilde{\alpha}{x}_n +\sigma\tilde{q}^{\mathrm{PA}}_{n}
\label{eq_busggang3}
\end{equation}
By comparing it with (\ref{eq_busggang}) it is visible that both linear scaling coefficients are equal $(\alpha=\tilde{\alpha})$ and that the nonlinear signals are scaled by $\sigma$ as
\begin{equation}
    {q}^{\mathrm{PA}}_{n}=\sigma\tilde{q}^{\mathrm{PA}}_{n}.
    \label{eq_equivalent_dist}
\end{equation}
From (\ref{eq_equivalent_dist}) the nonlinear distortion power is
\begin{equation}
   \mathbb{E} \left[\left|{q}^{\mathrm{PA}}_{n}\right|^2\right]=\sigma^2 \mathbb{E} \left[\left|\tilde{q}^{\mathrm{PA}}_{n}\right|^2 \right],
    \label{eq_equivalent_dist2}
\end{equation}
where $\mathbb{E} \left[\left|\tilde{q}^{\mathrm{PA}}_{n}\right|^2 \right]$ can be treated as scaling factor dependent on IBO, not $\sigma^2$ as justified above (\ref{eq_busggang3}).
 \end{proof}

Considering (\ref{eq:PA_in_out}) and (\ref{eq_power}) the equation (\ref{eq_lambda_def}) simplifies to:
\begin{align}
\label{eq:alpha_by_pdf}
    \alpha&=\frac{\mathbb{E}\left[|y_n| |x_n|\right]}{\sigma^2}
    \\& \nonumber
    =\frac{1}{\sigma^2}\!
    \left(\int_{0}^{\sqrt{P_{\mathrm{MAX}}}}\!\!\!
   z^2 f_z(z) dz
   +\int_{\sqrt{P_{\mathrm{MAX}}}}^{\infty}
   \sqrt{P_{\mathrm{MAX}}}z f_z(z) dz
   \right),
   \end{align}
where $f_z(z)$ is Probability Density Function (PDF) of random variable $z$ being amplitude of PA input signal $x_n$. For the OFDM signal, the amplitude is known to be Rayleigh distributed, as a consequence of the complex-Gaussian distribution of samples $x_n$, which equals
\begin{equation}
\label{eq_Rayleigh}
    f_{\mathrm{z}}(z)=\frac{2z}{\sigma^2}e^{-\frac{z^2}{\sigma^2}},
\end{equation}
resulting in IBO-only dependent scaling factor \cite{Kryszkiewicz_Battery_2023}:
\begin{align}
\alpha^{\mathrm{OFDM}}(\gamma) &=
   1-e^{-\gamma}+\frac{1}{2}\sqrt{\pi \gamma} \mathrm{erfc}\left(\sqrt{\gamma}\right).
   \label{alpha_OFDM}
\end{align}
In (\ref{alpha_OFDM}) $\alpha$ was changed to $\alpha^{\mathrm{OFDM}}(\gamma)$ to emphasize dependence only on IBO value. 

Unfortunately, obtaining such a closed-form for the SC-FDMA signal is not possible. While \cite{Ochiai_SCFDMA_distribution_2012} provides an approximate Cumulative Distribution Function for instantaneous power of SC-FDMA signal, this requires numerical integration. Still, it can be used to obtain $\alpha$ using (\ref{eq:alpha_by_pdf}). Differently from the OFDM case, in SC-FDMA the constellation has a significant impact on amplitude distribution\cite{Ochiai_SCFDMA_distribution_2012}. As such $\alpha^{SC-FDMA}(\gamma,M)$ can be defined\footnote{\label{note1}While in SC-FDMA the whole constellation shape influences the results \cite{Ochiai_SCFDMA_distribution_2012}, here for brevity of notation this will be marked by the constellation size $M$.}. 

In order to calculate nonlinear distortion power, first the power of signal $y_n$ can be calculated using (\ref{eq_busggang}) minding the lack of correlation between $q^{\mathrm{PA}}_{n}$ and $x_n$ as
\begin{equation}
\mathbb{E}\left[ \left| y_n \right|^2 \right]=\left|\alpha\right|^2 \mathbb{E}\left[ \left|x_n\right|^2 \right] +\mathbb{E}\left[ \left|q^{\mathrm{PA}}_{n}\right|^2 \right].
\label{eq_busggang_pow}
\end{equation}
Therefore nonlinear distortion power equals 
\begin{equation}
\mathbb{E}\left[ \left|q^{\mathrm{PA}}_{n}\right|^2 \right]=\mathbb{E}\left[ \left| y_n \right|^2 \right]-\left|\alpha\right|^2 \sigma^2.
\label{eq_busggang_pow2}
\end{equation}
This needs the power of the PA output signal to be calculated as 
\begin{align}
\label{eq:y_power}
    \mathbb{E}\left[ \left| y_n \right|^2 \right]&=
    \int_{0}^{\sqrt{P_{\mathrm{MAX}}}}
   z^2 f_z(z) dz
   +\int_{\sqrt{P_{\mathrm{MAX}}}}^{\infty}
   P_{\mathrm{MAX}} f_z(z) dz.
   \end{align}
For the OFDM signal (\ref{eq_Rayleigh}) can be used resulting in \cite{Kryszkiewicz_Battery_2023}:
\begin{align}
\label{eq:y_power_OFDM}
    \mathbb{E}\left[ \left| y_n \right|^2 \right]&=\sigma^2 \left(1-e^{-\gamma} \right).
\end{align}
and the distortion power
\begin{equation}
\label{eq_dist_OFDM}
\mathbb{E}\left[ \left|q^{\mathrm{PA}}_{n}\right|^2 \right]=\sigma^2 D^{\mathrm{OFDM}}(\gamma)
\end{equation}
where
\begin{equation}
    D^{\mathrm{OFDM}}(\gamma)=1-e^{-\gamma}-|\alpha^{\mathrm{OFDM}}(\gamma)|^2.
\end{equation}
Observe, that the distortion power is proportional to the input signal power $\sigma^2$ by proportionality factor $D^{\mathrm{OFDM}}(\gamma)$ as shown by Lemma \ref{lemma1}. 

In the case of SC-FDMA (\ref{eq_busggang_pow2}) and (\ref{eq:y_power}) can be used, utilizing the numerically calculated CDF function from \cite{Ochiai_SCFDMA_distribution_2012}, resulting in  
\begin{equation}
\mathbb{E}\left[ \left|q^{\mathrm{PA}}_{n}\right|^2 \right]=\sigma^2 D^{\mathrm{SC-FDMA}}(\gamma,M)
\end{equation}
with the scaling coefficient dependent on IBO and QAM/PSK constellation order $M$\footref{note1}.

The signal $y_n$ passes through a wireless channel of $L$ taps and after the addition of noise sample $w_n$ can be presented as
\begin{align}
    z_n&=\sum_{l=0}^{L-1}h_l y_{n-l} +w_n
    \\&=\alpha\sum_{l=0}^{L-1}h_l x_{n-l}
   +\sum_{l=0}^{L-1}h_l q^{\mathrm{PA}}_{n-l}
    +w_n,
\end{align}
where $h_l$ is a complex channel coefficient for $l$-th tap. 
The mean received signal power can be calculated as
\begin{align}
\label{eq_RX_sig_power}
    \mathbb{E}\left[\left|z_n\right|^2\right]&=\left|\alpha\right|^2\sum_{l=0}^{L-1}\mathbb{E}\left[\left|h_l \right|^2\right]\mathbb{E}\left[\left|x_{n-l}\right|^2\right]
   \\& \nonumber +\sum_{l=0}^{L-1}\mathbb{E}\left[\left|h_l\right|^2\right] \mathbb{E}\left[\left|q^{\mathrm{PA}}_{n-l}\right|^2\right]
    +\mathbb{E}\left[\left|w_n\right|^2\right]
\end{align}
utilizing that the wanted signal, distortion, and noise signals are uncorrelated. Moreover, as the mean signal power and distortion power are invariant from the sample index a total channel gain can be defined as
\begin{equation}
    |\overline{h}|^2=\sum_{l=0}^{L-1}\mathbb{E}\left[\left|h_l \right|^2\right],
\end{equation}
and (\ref{eq_RX_sig_power}) can be rewritten as
\begin{align}
\label{eq_RX_sig_power2}
    \mathbb{E}\left[\left|z_n\right|^2\right]&=\left|\alpha\right|^2 |\overline{h}|^2 \mathbb{E}\left[\left|x_{n}\right|^2\right]+|\overline{h}|^2 \mathbb{E}\left[\left|q^{\mathrm{PA}}_{n-l}\right|^2\right]
    +\mathbb{E}\left[\left|w_n\right|^2\right],
\end{align}
with the first component being wanted signal power, the second the nonlinear distortion power, and the third the white noise power.
This allows to define Signal to Noise and Distortion power Ratio (SNDR) as 
\begin{equation}
    SNDR=\frac{\left|\overline{h}\right|^2\left|\alpha\right|^2\sigma^2}{\left|\overline{h}\right|^2\mathbb{E}\left[\left|q^{\mathrm{PA}}_{n-l}\right|^2\right]+\mathbb{E}\left[\left|w_n\right|^2\right]}.
\end{equation}
For OFDM this can be made more specific by using (\ref{eq_dist_OFDM}) and (\ref{alpha_OFDM}) as 
\begin{align}
    SNDR^{\mathrm{OFDM}}&=\frac{\left|\overline{h}\right|^2\left|\alpha^{\mathrm{OFDM}}(\gamma)\right|^2\sigma^2}{\left|\overline{h}\right|^2
    \sigma^2 D^{\mathrm{OFDM}}(\gamma)+\mathbb{E}\left[\left|w_n\right|^2\right]}
    \label{eq_SNDR_OFDM}
    \\& \nonumber
    =\frac{\left|\alpha^{\mathrm{OFDM}}(\gamma)\right|^2}{
    D^{\mathrm{OFDM}}(\gamma)
    +\frac{\gamma}{SNR_{\mathrm{SAT}}}
     },
\end{align}
where $SNR_{\mathrm{SAT}}= \frac{\left|\overline{h}\right|^2 P_{\mathrm{MAX}}}{\mathbb{E}\left[\left|w_n\right|^2\right]}$ is a Saturation SNR independent of the chosen operating point of the PA, used similarly in \cite{Kryszkiewicz_Battery_2023,Taveres_IBO_opt_OFDM_2016}. The $SNR_{\mathrm{SAT}}$ can be treated as an SNR if maximum power unmodulated carrier is transmitted.

However, the above expression characterizes signals in the time domain over the whole distorted signal bandwidth. In practice, a part of the nonlinear signal will leak into the out-of-band frequency region, as such not influencing the reception performance. In \cite{lee2014characterization} it was shown that around 2/3 of the distortion power falls within the in-band frequency range. As such the in-band SNDR is defined as
\begin{align}
    \tilde{SNDR}^{\mathrm{OFDM}}&=\frac{\left|\alpha^{\mathrm{OFDM}}(\gamma)\right|^2}{
    \tilde{D}^{\mathrm{OFDM}}(\gamma)
    +\frac{\gamma}{SNR_{\mathrm{SAT}}}
     },
\end{align}
where 
\begin{equation}
    \tilde{D}^{\mathrm{OFDM}}(\gamma)=\frac{2}{3}\left(1-e^{-\gamma}-|\alpha^{\mathrm{OFDM}}(\gamma)|^2\right).
\end{equation}
Taking the frequency-domain characteristic of the wanted signals into account the $SNR_{\mathrm{SAT}}$ should be specified with only noise power observed over utilized subcarriers band, i.e., $SNR_{\mathrm{SAT}}= \frac{\left|\overline{h}\right|^2 P_{\mathrm{MAX}}}{N_{\mathrm{U}}/N\mathbb{E}\left[\left|w_n\right|^2\right]}$. This, improved $SNR_{\mathrm{SAT}}$ definition should be used for both $SNDR$ and $\tilde{SNDR}$ calculation.

For the SC-FDMA the time-domain SNDR can be defined, following the approach from (\ref{eq_SNDR_OFDM}), as
\begin{align}
    SNDR^{\mathrm{SC-FDMA}}&=\frac{\left|\alpha^{\mathrm{SC-FDMA}}(\gamma,M)\right|^2}{
    D^{\mathrm{SC-FDMA}}(\gamma,M)
    +\frac{\gamma}{SNR_{\mathrm{SAT}}}
     }.
\end{align}
Similarly, the in-band SNDR can be defined as 
\begin{align}
    \tilde{SNDR}^{\mathrm{SC-FDMA}}&=\frac{\left|\alpha^{\mathrm{SC-FDMA}}(\gamma,M)\right|^2}{
    \tilde{D}^{\mathrm{SC-FDMA}}(\gamma,M)
    +\frac{\gamma}{SNR_{\mathrm{SAT}}}
     }.
\end{align}
However, similarly as in \cite{Kun_SCFDMA_EVM_2017} $\tilde{D}^{\mathrm{SC-FDMA}}(\gamma,M)$ has to be obtained by means of Monte Carlo simulations. 

\section{SNDR maximization by PA's operating point adjustment}
\label{sec_SNDR_max}
In state-of-the-art wireless systems, the nonlinear distortion is limited at the transmitter, e.g., by means of Error Vector Magnitude specified in standards \cite{3gpp_38141,3gpp_36101}. This is equivalent to providing a sufficiently high Signal to Distortion Ratio (SDR). This allows the transmitted signal to be quasi-distortionless simplifying the system optimization.

However, as shown in the previous section the SNDR, both in OFDM and SC-FDMA systems depend on the IBO value. As such the SNDR maximization at the receiver can be defined as:
\begin{equation}
    \max_{\gamma} SNDR
\end{equation}
where the SNDR can be either time-domain or in-band only and defined for OFDM or SC-FDMA systems.

The optimal IBO for OFDM can be found by calculating the first derivative of SNDR over IBO which is followed by finding the root using the Newton-Raphson method like in \cite{Taveres_IBO_opt_OFDM_2016}.
As for the SC-FDMA, the fully numerical solutions, e.g., gradient descent algorithm, are the only possible. 

\section{Numerical Results}
\label{sec_simulation}
Here an OFDM/SC-FDMA link is considered of $N=512$ and $N_{\mathrm{U}}=24$. The $N$ should be much larger than $N_{\mathrm{U}}$ to accommodate nonlinear distortion out-of-band emission. Additionally, $N_{\mathrm{U}}$ should be large enough to justify the complex-Gaussian distribution of $x_n$ samples\cite{Dinis_Gaussian_OFDM_2012}. As shown in \cite{Taveres_IBO_opt_OFDM_2016} the constellation order $M$ or IFFT size $N$ does not have a significant influence on the SNDR results for OFDM. However, as the utilized constellation has a substantial impact on SC-FDMA distribution \cite{Ochiai_SCFDMA_distribution_2012}, QPSK ($M=4$) and 64QAM ($M=64$) has been considered in this case.

\begin{figure}[htbp]
\centerline{\includegraphics[width=0.48\textwidth]{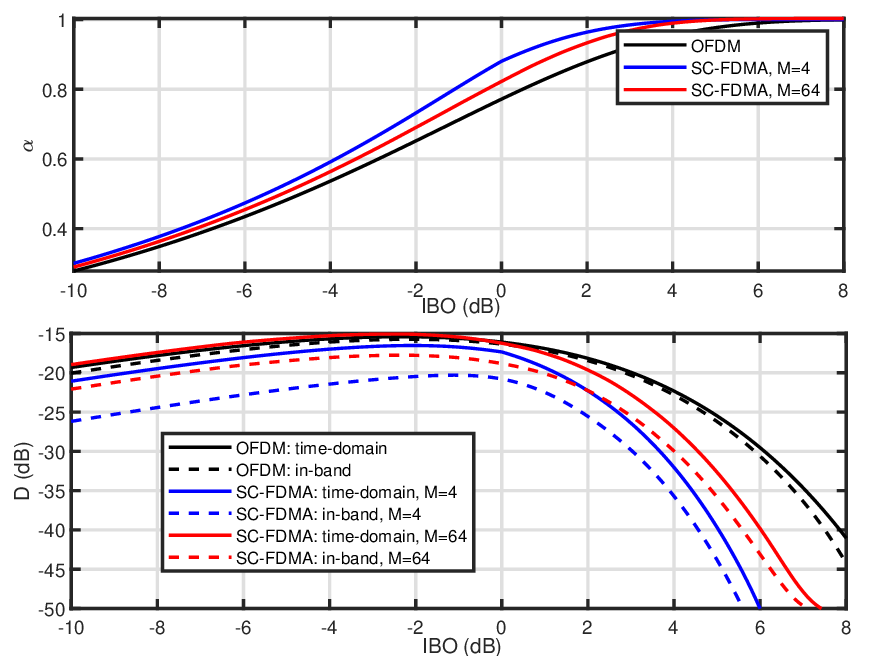}}
\caption{Coefficients $\alpha$ and $D$ as a function of IBO for OFDM/SC-FDMA (QPSK, 64QAM modulations) considering in-band or time-domain signals.}
\label{fig_alpha_D}
\end{figure}
First, values of $\alpha$ and $D$ have been evaluated for all considered systems while varying IBO value as visible in Fig. \ref{fig_alpha_D}. As expected $\alpha$ increases with IBO both for SC-FDMA and OFDM from 0 to 1. However, there is a difference in curves for all three cases. The wanted signal is the weakest for OFDM at a given IBO value. The distortion coefficient $D$ is not monotonic as a function of IBO. However, the curve shape is quite similar for all systems. As expected, the only in-band distortion is always weaker than the total (time-domain) distortion with the difference being the highest for the SC-FDMA system reaching up to 5 dB. 

\begin{figure}[htbp]
\centerline{\includegraphics[width=0.48\textwidth]{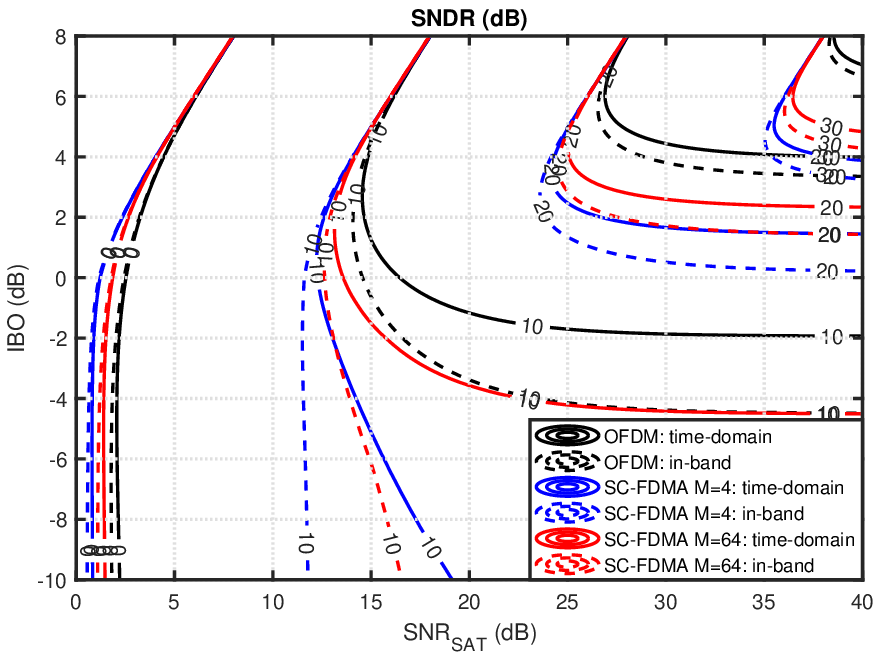}}
\caption{Contour plot of SNDR (in-band or time-domain) for SC-FDMA and OFDM signals as a function of IBO and $SNR_{\mathrm{SAT}}$.}
\label{fig_SNDR_contour}
\end{figure}
In Fig. \ref{fig_SNDR_contour} a contour plot of SNDR as a function of IBO and $SNR_{\mathrm{SAT}}$ is shown. It is visible that in general SNDR rises with the link quality (measured by $SNR_{\mathrm{SAT}}$) but there is an optimal IBO in each case. 

\begin{figure}[htbp]
\centerline{\includegraphics[width=0.48\textwidth]{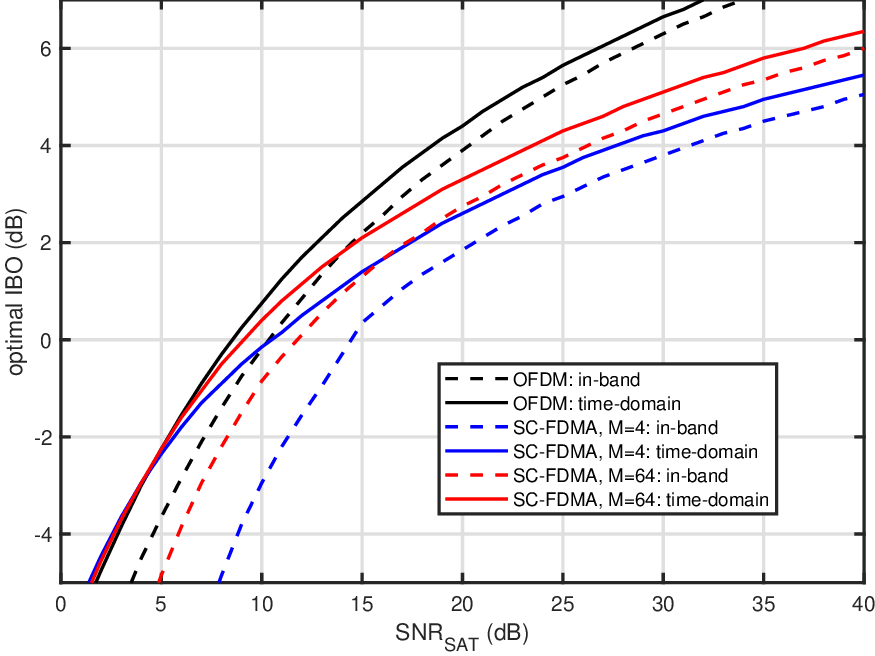}}
\caption{Optimal IBO as a function of $SNR_{\mathrm{SAT}}$ for SC-FDMA and OFDM signals, both in-band and time-domain distortion.}
\label{fig_optimal_IBO}
\end{figure}
This allows us to perform SNDR maximization for each $SNR_{\mathrm{SAT}}$ as explained in Sec. \ref{sec_SNDR_max}. The optimal IBO values are plotted as a function of $SNR_{\mathrm{SAT}}$ in Fig. \ref{fig_optimal_IBO}. It is visible that the optimal IBO increases with the link quality. When the link quality is poor (e.g.,  $SNR_{\mathrm{SAT}}<10 dB$) it is worth significantly clipping the signal, even blow mean signal power of the input signal, i.e., IBO below 0 dB. Most importantly, these observations are common for all considered systems, not only OFDM as observed in \cite{Taveres_IBO_opt_OFDM_2016,Kryszkiewicz_Battery_2023}. However, in general, the SC-FDMA requires lower IBO values than the OFDM system. Most significantly, if the modeling is carried according to the more accurate SNDR model, considering only in-band distortion, the optimal IBO is lower than in the previous, time-domain approach, sometimes by a few dB. As such it can be recommended to use only the improved SNDR definition. 

Finally, the maximal in-band SNDR, optimized considering in-band only distortions, as a function of $SNR_{\mathrm{SAT}}$ is shown in Fig. \ref{fig_max_SNDR}. First, it is visible that the maximal SNDR rises nearly linearly with $SNR_{\mathrm{SAT}}$ for every considered system, though the optimal SNDR differs among systems. Expectedly, the highest SNDR is achievable for SC-FDMA using QPSK modulation as this should have the smallest envelope fluctuations. However, the improvement with respect to the OFDM is not higher than 2 dB. As the optimal SNDR curves are nearly linear the least squared approximation was carried. The obtained linear functions, with coefficients shown in the legend of Fig. \ref{fig_max_SNDR}, constitute quite an accurate approximation of achievable SNDR while varying IBO value. In this figure additionally, reference solutions are shown based on EVM constraints from 5G/LTE standards. For OFDM a fixed IBO equal to 6 dB was used, as this results in SDR around 27 dB which is equivalent to 4.5\% EVM required for 256 QAM downlink transmission from 5G New Radio Base Station\cite{3gpp_38141}. With a similar reasoning based on \cite{3gpp_36101}, reference IBO was set to $-10$ dB and $2$ dB for SC-FDMA waveform utilizing QPSK and 64 QAM, respectively. 
It is visible that the reference solution only \emph{touches} the optimal curve for a $SNR_{\mathrm{SAT}}$ for which optimal IBO (from Fig. \ref{fig_optimal_IBO}) is equal to the reference scenario. As such for other $SNR_{\mathrm{SAT}}$ values some gain in SNDR can be achieved. For lower $SNR_{\mathrm{SAT}}$ values the SNDR gain is possible as a result of increased transmission power over a bad-quality channel. This increases wanted signal power at the cost of increased nonlinear distortion power. However, as the reception is white noise-limited, the increased distortion power is acceptable.  On the other hand, the SNDR gain for high $SNR_{\mathrm{SAT}}$ is obtained by decreasing nonlinear distortion power, being the main source of distortion at the receiver and, at the same time, decreasing the wanted signal power. 

\begin{figure}[htbp]
\centerline{\includegraphics[width=0.48\textwidth]{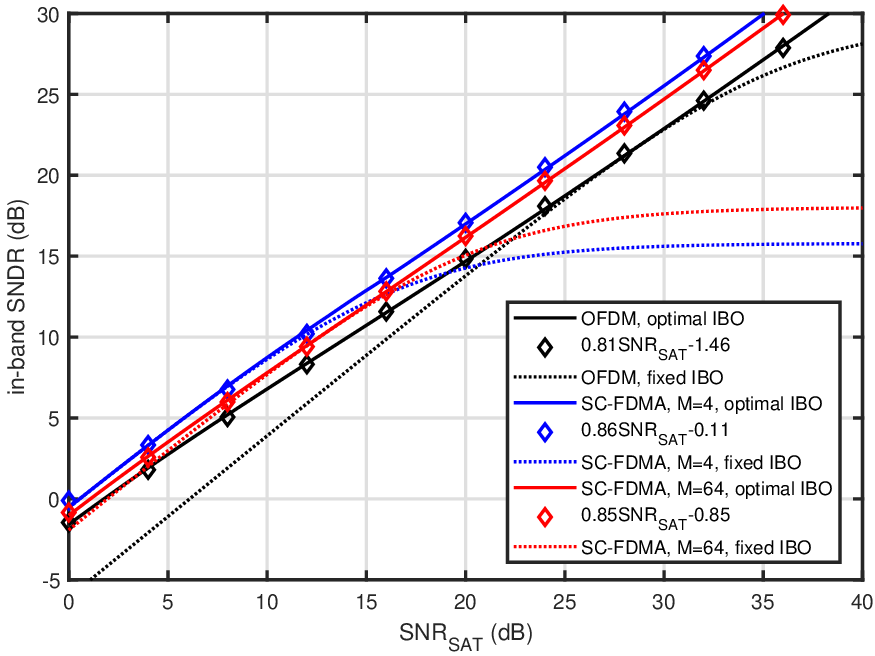}}
\caption{In-band SNDR for optimized IBO, with linear approximation, in comparison to a fixed IBO system as a function of $SNR_{\mathrm{SAT}}$ for SC-FDMA/OFDM.}
\label{fig_max_SNDR}
\end{figure}
\section{Conclusions}
\label{sec_conclusions}
The paper shows that adjustment of the PA operating point can boost significantly received signal quality with respect to the constant operating point. This is valid both for OFDM and SC-FDMA systems though the optimal operating point is different. Moreover, it has been shown that the in-band-only distortion model should be used instead of a time-domain model as it significantly influences the optimal solution. The next research direction can be to redefine the operating point adjustment goal, e.g., to maximize the energy efficiency of the transmission.
\bibliographystyle{IEEEtran}
\bibliography{biblio}

\begin{thebibliography}{10}
\providecommand{\url}[1]{#1}
\csname url@samestyle\endcsname
\providecommand{\newblock}{\relax}
\providecommand{\bibinfo}[2]{#2}
\providecommand{\BIBentrySTDinterwordspacing}{\spaceskip=0pt\relax}
\providecommand{\BIBentryALTinterwordstretchfactor}{4}
\providecommand{\BIBentryALTinterwordspacing}{\spaceskip=\fontdimen2\font plus
\BIBentryALTinterwordstretchfactor\fontdimen3\font minus \fontdimen4\font\relax}
\providecommand{\BIBforeignlanguage}[2]{{%
\expandafter\ifx\csname l@#1\endcsname\relax
\typeout{** WARNING: IEEEtran.bst: No hyphenation pattern has been}%
\typeout{** loaded for the language `#1'. Using the pattern for}%
\typeout{** the default language instead.}%
\else
\language=\csname l@#1\endcsname
\fi
#2}}
\providecommand{\BIBdecl}{\relax}
\BIBdecl

\bibitem{Ochiai_SCFDMA_distribution_2012}
H.~Ochiai, ``On instantaneous power distributions of single-carrier {FDMA} signals,'' \emph{IEEE Wireless Communications Letters}, vol.~1, no.~2, pp. 73--76, 2012.

\bibitem{Gharaibeh_book_nonlinear}
K.~M. Gharaibeh, \emph{Nonlinear Distortion in Wireless Systems: Modeling and Simulation with MATLAB}.\hskip 1em plus 0.5em minus 0.4em\relax New York: John Wiley \& Sons, 2011.

\bibitem{Kryszkiewicz_Battery_2023}
\BIBentryALTinterwordspacing
P.~Kryszkiewicz, ``Efficiency maximization for battery-powered {OFDM} transmitter via amplifier operating point adjustment,'' \emph{Sensors}, vol.~23, no.~1, 2023. [Online]. Available: \url{https://www.mdpi.com/1424-8220/23/1/474}
\BIBentrySTDinterwordspacing

\bibitem{3gpp_36101}
3GPP, ``{3rd Generation Partnership Project;Technical Specification Group Radio Access Network; LTE; Evolved Universal Terrestrial Radio Access (E-UTRA); User Equipment (UE) radio transmission and reception},'' 3GPP, Tech. Rep. TS 36.101 V18.1.0, 2024.

\bibitem{3gpp_38141}
------, ``{3rd Generation Partnership Project;Technical Specification Group Radio Access Network; NR; Base Station (BS) conformance testing Part 1: Conducted conformance testing (Release 18)},'' 3GPP, Tech. Rep. TS 38.141-1 V18.7.0, 2024.

\bibitem{Taveres_IBO_opt_OFDM_2016}
C.~H. Azolini~Tavares, J.~C. Marinello~Filho, C.~M. Panazio, and T.~Abrão, ``Input back-off optimization in {OFDM} systems under ideal pre-distorters,'' \emph{IEEE Wireless Communications Letters}, vol.~5, no.~5, pp. 464--467, 2016.

\bibitem{Kryszkiewicz_PIMRC_2019}
P.~Kryszkiewicz, F.~Idzikowski, B.~Bossy, B.~Kopras, and H.~Bogucka, ``Energy savings by task offloading to a fog considering radio front-end characteristics,'' in \emph{2019 IEEE 30th Annual International Symposium on Personal, Indoor and Mobile Radio Communications (PIMRC)}, 2019, pp. 1--6.

\bibitem{Kun_SCFDMA_EVM_2017}
K.~Wu, G.~Ren, and Q.~Wang, ``{Error Vector Magnitude Analysis of Uplink Multiuser OFDMA and SC-FDMA Systems in the Presence of Nonlinear Distortion},'' \emph{IEEE Communications Letters}, vol.~21, no.~1, pp. 172--175, 2017.

\bibitem{Raich_optimal_nonlin_2005}
R.~Raich, H.~Qian, and G.~Zhou, ``Optimization of {SNDR} for amplitude-limited nonlinearities,'' \emph{IEEE Transactions on Communications}, vol.~53, no.~11, pp. 1964--1972, 2005.

\bibitem{Wyglinski_predistortion_2016}
M.~Abdelaziz, Z.~Fu, L.~Anttila, A.~M. Wyglinski, and M.~Valkama, ``Digital predistortion for mitigating spurious emissions in spectrally agile radios,'' \emph{IEEE Communications Magazine}, vol.~54, no.~3, pp. 60--69, 2016.

\bibitem{Ochiai_2013_PA_efficiency}
H.~Ochiai, ``An analysis of band-limited communication systems from amplifier efficiency and distortion perspective,'' \emph{IEEE Transactions on Communications}, vol.~61, no.~4, pp. 1460--1472, 2013.

\bibitem{Wei_2010_OFDM_dist}
S.~Wei, D.~L. Goeckel, and P.~A. Kelly, ``{Convergence of the Complex Envelope of Bandlimited OFDM Signals},'' \emph{IEEE Transactions on Information Theory}, vol.~56, no.~10, pp. 4893--4904, 2010.

\bibitem{Kryszkieiwcz_ACTR_2018}
P.~Kryszkiewicz, ``{Amplifier-Coupled Tone Reservation for Minimization of OFDM Nonlinear Distortion},'' \emph{IEEE Transactions on Vehicular Technology}, vol.~67, no.~5, pp. 4316--4324, 2018.

\bibitem{lee2014characterization}
T.~Lee and H.~Ochiai, ``{Characterization of power spectral density for nonlinearly amplified OFDM signals based on cross-correlation coefficient},'' \emph{EURASIP Journal on Wireless Communications and Networking}, vol. 2014, pp. 1--15, 2014.

\bibitem{Dinis_Gaussian_OFDM_2012}
T.~Araujo and R.~Dinis, ``{On the Accuracy of the Gaussian Approximation for the Evaluation of Nonlinear Effects in OFDM Signals},'' \emph{IEEE Transactions on Communications}, vol.~60, no.~2, pp. 346--351, 2012.

\end{thebibliography}

\end{document}